\newcounter{brojac}
\newtheorem{remark}[brojac]{Remark}
\newtheorem{theorem}{Theorem}
\newtheorem{lemma}[theorem]{Lemma}
\begin{document}
%
\title{Slotted Aloha for Networked Base Stations}

\author{\IEEEauthorblockN{Dragana Bajovi\'c, Du$\check{\mbox{s}}$an Jakoveti\'c}
\IEEEauthorblockA{BioSense Center,\\
University of Novi Sad, 
Novi Sad, Serbia\\
Email: \{dbajovic, djakovet\}@uns.ac.rs}
\and
\IEEEauthorblockN{Dejan Vukobratovi\'c, Vladimir Crnojevi\'c}
\IEEEauthorblockA{Department of Power, Electronics, and Communications Engineering,\\
University of Novi Sad, 
Novi Sad, Serbia\\
Email:\{dejanv, crnojevic\}@uns.ac.rs}}


%


\maketitle

\vspace{-0mm}

\begin{abstract}
We study multiple base station, multi-access systems in which the user-base station adjacency is induced by geographical proximity. At each slot, each user transmits (is active) with a certain probability, independently of other users, and is heard
by all base stations within the distance~$r$. Both the users and base stations are
placed uniformly at random over the (unit) area. We first consider a non-cooperative decoding
where base stations work in isolation, but a user is decoded as soon as one of its nearby base stations reads
a clean signal from it. We find the decoding probability and quantify the gains
introduced by multiple base stations. Specifically, the peak throughput
increases linearly with the number of base stations~$m$ and is roughly $m/4$ larger than
the throughput of a single-base station that uses standard slotted Aloha. Next,
we propose a cooperative decoding, where the mutually close base stations inform each other whenever
they decode a user inside their coverage overlap. At each base station, the messages received from the nearby stations
help resolve collisions by the interference cancellation mechanism.
 Building from our exact formulas for the non-cooperative case, we provide a heuristic formula
  for the cooperative decoding probability that reflects well the actual performance.
  Finally, we demonstrate by simulation significant gains of cooperation with respect to the non-cooperative decoding.
\end{abstract}



%
\IEEEpeerreviewmaketitle

\vspace{-0mm}
\section{Introduction}
\label{section-intro}
Slotted {Aloha}~\cite{Roberts} and framed slotted {Aloha}~\cite{Fadra} are
well-known schemes for uncoordinated multiple access that date back to the 70s.
 With these schemes, the time is divided into slots, and, at each slot, users
contend to transmit their packets to the base station. With slotted Aloha,
each user transmits at each slot with a certain probability; with framed slotted Aloha, slots are grouped into frames, and each user randomly selects a slot at each frame to transmit.

\begin{figure}[thpb]
      \centering
      \includegraphics[trim = 0mm 6.5mm 0mm 0mm, clip, width=8cm]{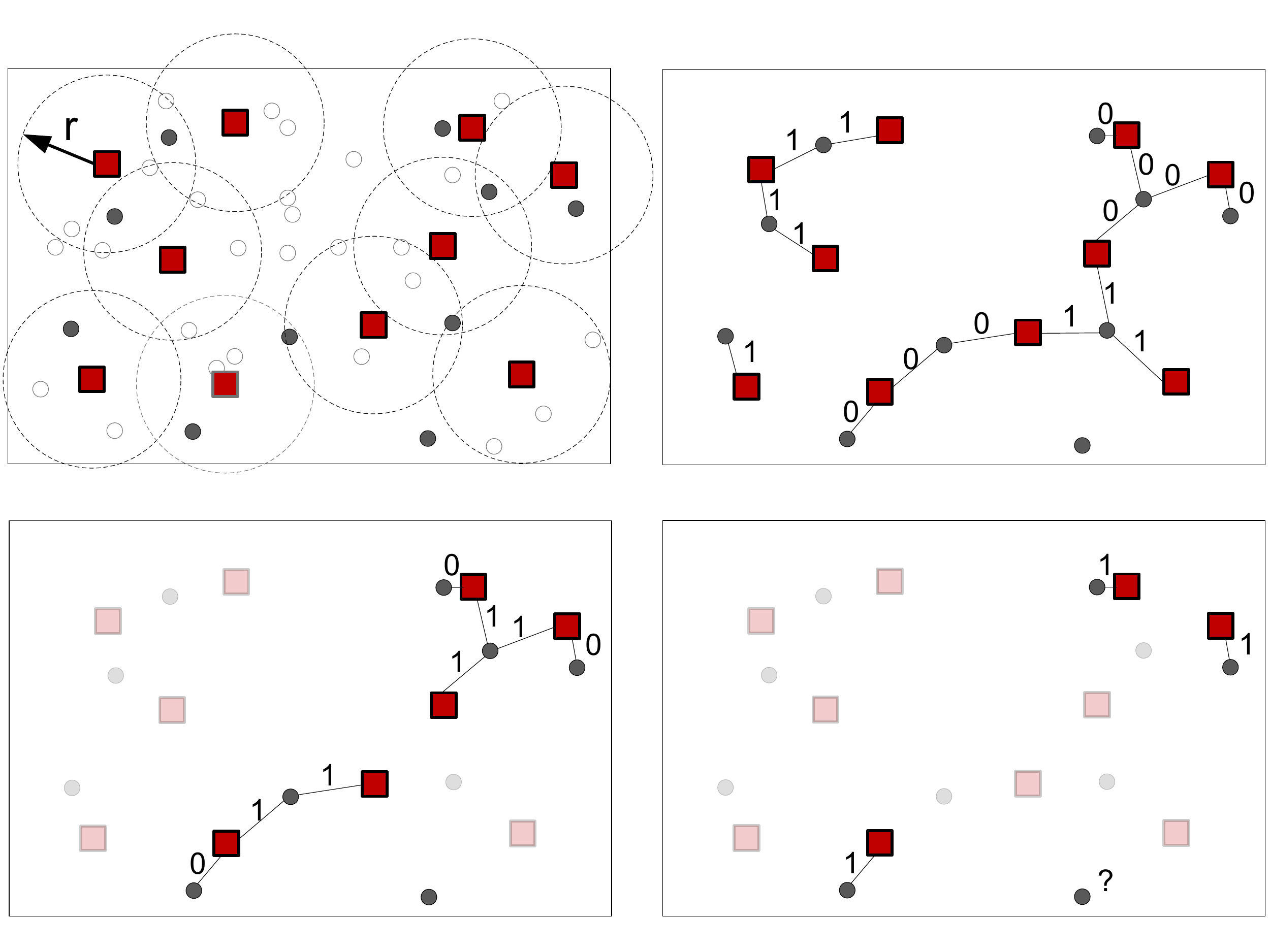} 
       \caption{Top left: Illustration of the multi-access system. Red squares represent base stations.
       Full circles represent active users, while empty circles represent inactive users.
       Top right, bottom left, and bottom right depict, respectively, first, second, and third iteration
       of the cooperative decoding algorithm for the network example in the top left figure.
       An edge is labeled with one at iteration~$t$ if its incident user gets collected at~$t$; such an
        edge is erased at~$t+1$; an edge is labeled with $0$ at iteration~$t$ if
        its incident user is unknown (not decoded) up to and at iteration~$t$.}
       \label{figure-system}
       \vspace{-.0cm}
\end{figure}

In the past decade, there has been much progress in the development of slotted Aloha type protocols,  e.g.,~\cite{SlottedALOHAwithIC,liva,FramelessALOHA}, with dramatic throughput improvements.
 Reference~\cite{SlottedALOHAwithIC} introduces a framed protocol
 that allows for multiple user transmissions, as considered in~\cite{DiversityAloha} in the past,
 but with a novel successive interference cancelation mechanism~\cite{SlottedALOHAwithIC}.
  In~\cite{SlottedALOHAwithIC}, users transmit (are active) at multiple slots at each frame,
 and send, along with their packet replicas, pointers to the corresponding activation slots.
 When a slot with a single active user occurs, this allows the base station not only collect this user,
  but also subtract its contribution in each other slot the user was active. This likely resolves collisions
 in some of the past slots and thereby allows for collecting additional users. More recently,~\cite{liva} demonstrates
  that successive interference cancelation is analogous to the belief propagation
  erasure-decoding of the codes on graphs. Exploiting this analogy,~\cite{liva} introduces a variable number
 of users' transmission attempts and optimizes their distribution to maximize the throughput.
 Building on an analogy with rateless codes, reference~\cite{FramelessALOHA} introduces the frameless Aloha protocol
  that further enhances the throughput. With the protocol therein, the frame size is not fixed a priori, but rather it adds new slots until a desired fraction of decoded
 users is achieved.

All the above references exploit \emph{temporal} diversity.
 Reference~\cite{ZorziSpatialDiversity} considers \emph{multiple receiver} multi-access systems with \emph{spatial} diversity which arises from independent fading of different user-receiver links.
 It analyzes the capture performance of the system under Rayleigh fading and shadowing.
%
%
 A recent reference~\cite{LivaNovo} also considers a multiple receiver case with spatial diversity.
  Under independent on-off fading, it quantifies analytically the gains in the throughput introduced by multiple receivers (over the single receiver case), as well as the impact of the fading probability on these gains.

In this paper, we also study the spatial diversity effects with multiple receivers,
 but under a very different model than the ones in~\cite{ZorziSpatialDiversity,LivaNovo}.
 A total of $m$ base stations (receivers) are deployed over a (unit) geographic area, and they
 jointly serve $n$ users (transmitters).
 Both the users and base stations are
 placed uniformly at random over the area. At a fixed time slot,
 each user transmits (is active) with probability~$p$, independently from other users.
 Each base station can hear all active users that are within distance~$r$ from it,
  where $r$ is small compared to the diameter of the area. The base station thus receives
 a superposition of the signals of active users in its $r$-neighborhood. (The signals of the users outside
 the $r$-neighborhood do not contribute to the signal.) 

We first consider the slotted Aloha protocol where each base station performs decoding in isolation (without
cooperating with other stations). It decodes a user whenever
there is a single active user in its $r$-neighborhood.
We find the probability that an arbitrary fixed user is decoded, both in the
finite regime and asymptotically, when $n,m\rightarrow \infty$ and $r \rightarrow 0$ ($p$-fixed.)
 Further, we quantify the gains of diversity introduced with multiple base stations.
 In particular, the peak throughput (expected number of decoded users per slot)
  is increased $\delta \, m$ times with respect to the single-base station slotted Aloha,
  where $\delta$ is a positive constant. (In particular,
  $\delta \approx 1/4$, see Section~\ref{section-performance-analysis} for details.)
  In other words, the throughput scales linearly with~$m$. For example, for $m=100$ and $r \approx 0.1$,
  the peak throughput is about~$20$.

Next, we propose a cooperative, iterative decoding where the base stations that are geographically
close communicate during decoding iterations. Specifically, we assume that, if a base station detects
a user, it knows at which other base stations this user is also heard, and it informs them of this users' ID and
its information packet. The contacted base stations can subtract the interference contribution of the received
signal, which possibly reveals additional clean packet readings.
We show by simulation that cooperation introduces significant gains
in the system performance. For example, for $m=100$ and $r \approx 0.1$, the
peak throughput increases from $20$ (no cooperation) to $33$ (with cooperation).
 Also, the maximal load for which the decoding probability is above a prescribed value (e.g.,
 $0.95$) is about $3$ times larger under cooperation than without cooperation,
 for a wide range of~$r$.

  Structurally, this decoding algorithm is analogous to the interference cancellation decoding
 in, e.g.,~\cite{liva,SlottedALOHAwithIC}, and it can be represented
 via message passing on a bipartite graph like in~\cite{liva}. Active users here correspond to users in~\cite{liva},
  base stations correspond to different slots (check nodes) in~\cite{liva},
  and the links are the physical links between active users and base stations.
  However, the structure of the graph here is induced by geometry and is very different from
  the random graph in~\cite{liva}. (See Section~\ref{section-decoding-algs} for details.)
   Evaluating the decoding probability here is very challenging and standard tools
   like and-or-tree analysis~\cite{AndOrTree} do not directly apply.
   We make the first step towards this goal by giving a heuristic formula
   that reflects well the actual performance. We derive the heuristic
   building from our results for the non-cooperative decoding.

In this paper, cooperation among base stations is confined \emph{within a single time slot} and is independent
across slots. In other words, this paper exploits \emph{spatial} diversity.
 In our ongoing work, we exploit the potential of both \emph{spatial and temporal} diversity
 by allowing that base stations cooperate both across space (as considered here) and across slots.
 The motivation for this comes from the single base station systems, where
 successive interference cancellation across slots yields dramatic throughput improvements.

Finally, we believe that our studies have a potential to find applications in massive uncoordinated multiple access in various networks, such as cellular, satellite, and vehicular networks, including recently popular machine-to-machine (M2M) services over these networks.

\textbf{Paper organization}. The next paragraph introduces notation.
Section~\ref{section-system-model} details the system model that we assume, and Section~\ref{section-decoding-algs}
presents our decoding algorithms. Section~\ref{section-performance-analysis} presents
our results on the performance of the two decoding algorithms. Section~\ref{section-numerical-interpretations}
gives numerical studies and interpretations. Finally, we conclude in Section~\ref{section-conclusion}.

\textbf{Notation}. We denote by: $\mathbf{B}(q,s)$
the Euclidean ball in the $2$-dimensional space centered at $q$ with radius $s$;
$\mathbf{B}_{\infty}(q,s)$ the square centered at $q$, with the side length equal to~$2 s$;
$1_{E}$ the indicator of event $E$; $\mathbb P$ and $\mathbb E$ the probability and expectation operators, respectively.

\vspace{-0mm}
\section{System model}
\label{section-system-model}
%
%

%
%
%
%
%
%
%
%
We consider a multi-access system with $n$ users and $m$ base stations. We denote by $U_i$, $i=1,...,n$ a user,
and by $B_l$, $l=1,...,m$, a base station. Users and base stations are distributed over a geographical area, and each user $U_i$ can be heard by all base stations within distance $r$ from~$U_i$. (See Figure~\ref{figure-system}, top left, for an illustration.) The time is divided into slots. As the number of users $n$ may be larger than the number of base stations $m$ (as is common in practical scenarios), to avoid excessive collisions, different users' transmissions are distributed across time slots, i.e., only a subset of users transmits at a certain slot.
 In this paper, we assume that decoding is completely decoupled (independent) across slots.
 Henceforth, from now on, it suffices to consider the system at a single, fixed slot. To keep the exposition general, we assume that each user $U_i$ transmits its message at a fixed slot with probability $p$, independently from other users, and that all transmissions are slot-synchronized. This model subsumes, e.g., the following system.
  There are $\tau$ available slots in each frame. Users' and
  base stations' placements are fixed during the frame. Each user transmits once per frame, with equal probability across the $\tau$ slots. In our model, this corresponds to setting $p=1/\tau$.
  We say that a user is active at a certain slot if it transmits at this slot.

 We let $G:=np/m$, and we call $G$ the normalized load. The quantity $G$ equals the expected
 number of active users at a fixed slot per base station.
 The message of user $U_i$ contains the information packet and a header with the user's ID.
 If $U_i$ is within distance $r$ from $B_l$, we say that $U_i$ and $B_l$ are adjacent.
 Each base station $B_l$ therefore hears a superposition (collided message in general) $y_l$ from all active adjacent users.
 We explain decoding mechanism in Section~\ref{section-decoding-algs}.

We now detail the placement model at a fixed slot. Both users and base stations at a fixed slot are placed in the unit square $\mathcal A=\mathbf B_{\infty}(0,1/2)$, centered at $(0,0)$. User $U_i$ is situated at a location $u_i$, where
 $u_i$ is selected from $\mathcal A$ uniformly at random, independently from other user's locations.
 Further, base station $B_l$ is positioned at a location $b_l$,
 where $b_l$ is selected from $\mathcal A$ uniformly at random, independently from other stations' locations.
 We assume that the placements of users and base stations are also mutually independent.


For the purpose of analysis, we differ two types of placements. We define $\mathcal A^{\mathrm{o},r}:= \mathbf B_{\infty}(0,1/2-2r)$, and say that a user is nominally placed if its position is in $\mathcal A^{\mathrm{o},r}$, and similarly for a base station. If, on the other hand, a user or a base station lies in the strip $\partial \mathcal A^{r}$ along the boundary of $\mathcal A$, $\partial \mathcal A^{r}:=\mathcal A\setminus \mathcal A^{\mathrm{o},r}$, we call this a boundary placement. Since placements are uniform over $\mathcal A$, the probability of the nominal placement is $(1-4r)^2$, and the probability of the boundary placement is $1-(1-4r)^2=8 r-16 r^2$. We see that, as the radius $r$ decreases, the probability of the nominal placement goes to one, and hence we can neglect in the analysis all the effects caused by the boundary placements.

\textbf{Degree distributions}. For future reference, we introduce the users' and base stations' degree distributions when they are
nominally placed. Fix arbitrary user $U_i$, and arbitrary point $q \in \mathcal{A}^{\mathrm{o},r}$, and let $\Lambda_d = \mathbb P \left( \mathrm{deg}(U_i)=d\,|\,u_i=q\right)$, i.e.,
 $\Lambda_d$ is the conditional probability that $U_i$ has exactly $d$ adjacent base stations, given that it is nominally placed.
  It is easy to show that degrees follow binomial distribution, i.e.,
 $
  \Lambda_d = {m \choose d} (r^2\pi)^d (1-r^2\pi)^{m-d},
 $
  $d=0,...,m$.
Similarly, let $\Psi_d = \mathbb P \left( \mathrm{deg}(B_l)=d\,|\,b_l=q \right)$,
where $\mathrm{deg}(B_l)$ denotes the number of \emph{active} users $U_j$, $j \in \{1,...,n\}\setminus \{i\}$, adjacent to $B_l$. (We exclude arbitrary fixed user $U_i$, as needed for subsequent analysis.) We have $
  \Psi_d = {n-1 \choose d} (p\,r^2\pi)^d (1-p\,r^2\pi)^{n-1-d}$, $d=0,...,n-1$.
We will also be interested in the asymptotic regime, when $n \rightarrow \infty$,
$r=r(n) \rightarrow 0$, and $m=m(n) \rightarrow \infty$ ($p$ is fixed), such that
%
%
 $m r^2 \pi \rightarrow \lambda,$ $n p\,r^2 \pi \rightarrow \psi,$
%
where $\lambda,\psi>0$ are constants.
In such setting, the users' and base stations' degree distributions
 converge to Poisson distributions with parameters $\lambda$ and $\psi$, respectively, i.e., for all $d=0,1,...$:
\begin{equation}
\label{eqn-lambda-psi-infty}
\Lambda_d  \rightarrow \Lambda_{\infty,d}:=e^{-\lambda}\frac{\lambda^d}{d!},\:\:\:\:
\Psi_d  \rightarrow \Psi_{\infty,d}:=e^{-\psi}\frac{\psi^d}{d!}.
\end{equation}
Hence, in the asymptotic regime, $\lambda$ is the average number of
base stations adjacent to a fixed user $U_i$, and $\psi$ is the
average number of active users adjacent to a fixed base station~$B_l$.
 It is easy to see that $\lambda $ and $\psi$ are related
 as $\psi = G\,\lambda$.

\textbf{Coverage}. Consider $\mathbb E\left[ \frac{1}{n} \sum_{i=1}^n 1_{\left\{U_i\mathrm{\;cov.}\right\}} \right]
=  \mathbb P \left( U_i\mathrm{\;cov.}\right)$,
where the event $\left\{ U_i\mathrm{\;cov.}\right\}$ means that $U_i$ is heard (``covered'') by at least
one base station. We refer to the latter quantity as the expected coverage.
%
 We have $P \left( U_i\mathrm{\;cov.}\,|\,u_i \in \mathcal{A}^{\mathrm{o},r}\right)=1-\Lambda_0$, and
$\mathbb P \left( U_i\mathrm{\;cov.}\right) \rightarrow 1-\Lambda_{\infty,0}=1-\mathrm{exp}(-\lambda)$.
 An active user $U_i$ can be collected
only if it is covered, no matter what decoding is used. Therefore, for a high decoding probability, we cannot have
$\lambda$ (or $r$) too small. Henceforth, from now on
we assume $\lambda \geq \lambda_{\mathrm{min}}(\epsilon) := \mathrm{ln}(1/\epsilon)$,
such that $1-\epsilon$ coverage is ensured; e.g., for $\epsilon=0.05$, $\lambda_{\mathrm{min}}(\epsilon) \approx 3$.


\vspace{-0mm}

\section{Decoding algorithms}
\label{section-decoding-algs}
Subsection~\ref{subsection-noncooperative} details the non-cooperative decoding, and Subsection~\ref{subsection-cooperative}
 details the cooperative decoding algorithm.
\vspace{-0mm}
\subsection{Non-cooperative decoding}
\label{subsection-noncooperative}
We now explain the non-cooperative decoding algorithm, where each base station works in isolation.
 At each base station, decoding is the simple slotted Aloha decoding. Suppose that station $B_l$
  received signal $y_l$. We assume that $B_l$ can determine if $y_l$ corresponds to a ``clean'' message.
  In other words, if, at a fixed slot, there is a single active user $U_i$ in $\mathbf{B}(b_l,r)$
   then $B_l$ collects user $U_i$ (it reads its packet and obtains its ID).
   We say that a user is collected at a fixed time slot if it is collected by at least
   one base station at this slot.
   For example, for the network in Figure~\ref{figure-system}, top left,
   we can see that $4$ out of $10$ active users are collected.
%
%
%
\subsection{Cooperative decoding}
\label{subsection-cooperative}
We now present the cooperative decoding algorithm, where neighboring base station collaborate to collect users.
We assume that each base station $B_l$ is aware of which users (either active or inactive) it covers, i.e., it knows the IDs
 of all its adjacent users (e.g., through some sort of association procedure). Further, for each of its adjacent users $U_i$, $B_l$ knows the list of the base stations $B_k$, $k \neq l$, to which $U_i$ is also adjacent.
%
%
 We say that two base stations are neighbors if
they share at least one user. The decoding is iterative and involves communication between
neighboring base stations. Each base station $B_l$ maintains over iterations $t$, $t=0,1,...,$ a signal $z_l=z_l(t)$.
Initially, at $t=0$, $z_l(t)$ is the received signal $y_l$ from its active adjacent users (either a clean message from an active user, a collided message, or an empty message if neither of the users in $\mathbf{B}(b_l,r)$ is active.)
Station $B_l$ at a certain iteration~$t$ may receive
a message $x^{(k)}$ from a neighboring base station $B_k$.
This happens if $B_k$ decodes a user at~$t$, which we call~$U^{(k)}$,
and if $U^{(k)}$ is adjacent to both $B_l$ and $B_k$.  The message $x^{(k)}$ contains the packet of
user $U^{(k)}$ and its ID.
 Upon reception of $x^{(k)}$, station $B_l$ subtracts the interference contribution of user $U^{(k)}$, which we symbolically write as $z_l \leftarrow z_l-x^{(k)}$. Station $B_l$ can
recognize if the updated signal $z_l$ corresponds to a clean packet, and, if so, it reads the packet and determines to which user it belongs.\footnote{Our decoding puts an additional physical requirement on the receivers. To illustrate this, let users' messages $x_i$, $i=1,...,n$, be real, positive numbers and signal $y_l$
received by $B_l$ be $y_l=\sum_{j \in \Omega_l}h_{l,j} x_j$. Here, $h_{l,j}$ is the (positive)
 channel gain, and $\Omega_l$ is the set of active users covered by~$B_l$.
 Let, at the first decoding iteration~$t=1$, $B_k$ reads a clean signal
 $y_k = h_{k,i}\,x_i$, where $U_i$ is adjacent to both $B_l$ and $B_k$.
  Then, $B_k$ transmits to $B_l$ the message $x^{(k)}=x_i=y_k/h_{k,i}$. Upon reading the real number $x_i$,
 $B_l$ performs the (real-number) subtraction
 $z_l \leftarrow  z_l -h_{l,i} \,x_i$. Note that $B_k$ and $B_l$ need to perform re-scaling
 by their respective channel gains. Hence, each $B_l$ needs the channel gains $h_{l,j}$ to all its adjacent users~$U_j$.
 Now, consider~\cite{liva}, where
  each slot (check node) $l$ lies at the same physical location of the single base station.
 Check node $l$ receives
 $y_l=\sum_{j \in O_l} h_j \,x_j$, where $h_j$ is the channel gain from
 user $U_j$ to the base station, and $O_l$ are active users at slot~$l$. When check node $k$ has a clean signal
 $y_k = h_{i}\,x_i$ with $U_i$ adjacent to both check nodes (different slots) $l$ and $k$,
 it can just ``send'' $y_k$ to $l$, and $l$
 performs $z_l \leftarrow z_l - y_k$. Hence, no re-scaling by the channel gains $h_i$'s is needed.} The decoding algorithm operates as follows. At iteration $t$, $t=1,2,...$, all base stations work in synchrony and perform the same steps.
%
%
%
 Iteration~$t$ at an arbitrary station~$B_l$ has three steps: 1) check signal,
 2) collect and transmit, and 3) receive and update. The last two steps always occur exclusively, i.e., one and only
 one among the two is always performed. As we will see, each base station performs at most $m$ iterations. We assume
 that all stations know $m$ beforehand.

\emph{Step~1: Check signal}: $B_l$ checks whether signal $z_l$ corresponds to a ``clean'' packet. If this is true, it performs
 the collect and transmit step; otherwise, it performs the receive and update step.

\emph{Step~2: Collect and transmit}: $B_l$ collects a user $U^{(l)}$ and reads its ID.
 It transmits message $x^{(l)}$ to all $B_k$'s, $k \neq l$, that are adjacent to $U^{(l)}$.
We call the latter set of stations $\Omega^{(l)}.$ After transmissions, $B_l$ leaves the algorithm.

%

\emph{Step~3: Receive and update}: $B_l$ scans over all messages $x^{(k)}$ that it received at~$t$
  and identifies the subset~$\mathcal{J}^{(l)}$ of all distinct messages.\footnote{Among the received messages, there may be repetitions, i.e., there may be two or more
 equal messages received.} Subsequently, $B_l$
  subtracts from $z_l$ the interference contributions from all $x_j$'s,  $j \in \mathcal{J}^{(l)}$, which
  we symbolically write as $z_l \leftarrow z_l - \sum_{j \in \mathcal{J}^{(l)}}x_j$.
  Set $t \leftarrow t+1$. If $t=m$, $B_l$ leaves the algorithm; otherwise, it
  goes to step~1.


\textbf{Graph representation of decoding}. We now introduce a graphical message-passing representation of
decoding. It involves the evolution of a bipartite graph $\mathcal G_t$ over iterations $t$.
Graph $\mathcal{G}_t$ has two types of nodes -- base stations and active users. Both the node sets and the edge set change (reduce) over iterations~$t$.
 It is initialized by $\mathcal{G}_0$, where $\mathcal{G}_0$ is defined as follows: it has the node set
 that consists of all base stations and all \emph{active} users. Its set of links contains all pairs $(B_l,U_i)$ such that
$B_l$ and $U_i$ are within distance $r$ from each other (and $U_i$ is active.)
We now describe one iteration~$t$.

\emph{Graph decoding iteration}. All $B_l$'s in $\mathcal{G}_t$
 check in parallel if their degree in $\mathcal G_t$ equals one. Let $\mathcal{L}_t \subset \{1,...,m\}$ be the set of degree one base stations in $\mathcal G_t$. If $\mathcal L_t$ is empty, the algorithm terminates. Otherwise,
 for each $l \in \mathcal L_t$, let $U^{(l)}$ be the user adjacent to $B_l$.
  Remove from $\mathcal G_t$ all $B_l$'s and $U^{(l)}$'s, $l \in \mathcal L_t$, and all the links incident to $U^{(l)}$, $l \in \mathcal L_t$.
 Set $t \leftarrow t+1$. 

It is easy to see that the above algorithm terminates after at most $m$ iterations. Namely, at each iteration $t$, either
at least one base station node is removed, or the algorithm terminates at~$t$. Therefore, at most $m$ iterations can be performed.
 For the network in Figure~\ref{figure-system}, top left,
 we show decoding iterations in Figures~\ref{figure-system}, top right (at $t=1$),
 bottom left ($t=2$), and bottom right ($t=3$). We can see that cooperative
 decoding collects $9$ out of $10$ users, while the non-cooperative collected~$4$.
\footnote{
The graph decoding algorithm here is very similar to
the interference cancelation decoding in, e.g.,~\cite{liva}, and the
iterative (graph-peeling) decoding of {LDPC} codes over the erasure channel, e.g.,~\cite{ERC}.
 The analogy with~\cite{liva} is that base stations here correspond
to different slots (check nodes) in~\cite{liva}, and \emph{active} users here
correspond to users in~\cite{liva}.
However, the graph structure here
is very different from the one in~\cite{liva}. First, for two users $U_i$ and $U_j$ with $u_i$ close to $u_j$, there is
a high overlap between $\mathbf{B}(u_i,r)$ and $\mathbf{B}(u_j,r)$, and hence
 the sets of their adjacent base stations (the check nodes
 to which $U_i$ and $U_j$ connect) have a high overlap.
 This is in contrast with the random graph model where the neighborhood sets of different users
 are independent. Second, in~\cite{liva}, with high probability, the sizes of cycles
 grow with $n$ as $\log n$, while here small cycles occur with a non-vanishing probability.}

%
%

%
%
%
%
%
%

\vspace{-0mm}

\section{Performance analysis}
\label{section-performance-analysis}
In this section, we study the performance of both non-cooperative and cooperative decoding schemes.
Specifically, our goal is to determine the expected \emph{fraction} of decoded users per time slot,
$
\mathbb E\left[ \frac{1}{n} \sum_{i=1}^n 1_{\left\{U_i\mathrm{\;coll.}\right\}} \right].
$
 Exploiting the symmetry across users, we have that the above quantity
equals $(1/n) \,n\,\mathbb P \left( U_i\mathrm{\;coll.}\right)$ =  $\mathbb P \left( U_i\mathrm{\;coll.}\right). $
Hence, our task reduces to finding the probability that  arbitrary fixed
user $U_i$ is collected.
The following simple relation will be useful throughout:
$\mathbb P \left( U_i\mathrm{\;coll.}\right) = p\,\mathbb P \left( U_i\mathrm{\;coll.}\,|\,U_i\mathrm{\;act.}\right)$, which is easily
obtained after conditioning on the event that $U_i$ is active and using that $p$ is the probability of $U_i$ being active.
(Here, abbreviation ``$U_i$ coll.'' stands for $U_i$ is collected, and
``$U_i$  act.'' stands for $U_i$ is active.) We also consider the normalized, per station throughput
$T(G)=(1/m)\mathbb E\left[ \sum_{i=1}^n 1_{\left\{U_i\mathrm{\;coll.}\right\}} \right]$ -- the expected \emph{total} number of collected users per slot, per station. Next, recall $\lambda=m r^2 \pi$, and, for fixed $m,p,$ and $r$, we will be interested in the following quantity:
\begin{figure}[thpb]
\vspace{-0mm}
      \centering
       \includegraphics[trim = 0mm 7mm 0mm 10mm, clip, height=2.5 in,width=3.7 in]{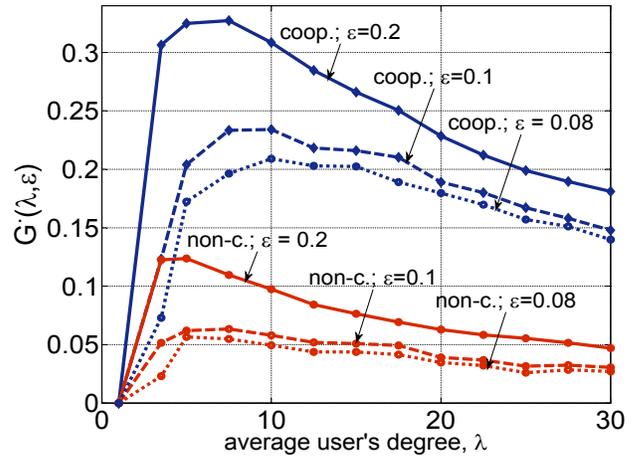}
       \caption{Simulated quantity ${G^\bullet(\lambda,\epsilon)}$ in~(2) versus the
       average user's degree $\lambda = m r^2\pi$ for different values of $\epsilon \in \{0.08;0.1;0.2\}$.}
       \label{figure-G-bullet}
       \vspace{-0mm}
\end{figure}
\begin{equation}
\label{eqn-metric}
{G^\bullet(\lambda,\epsilon)}= \sup \left\{ G \geq 0:\,\, \mathbb P \left( U_i\mathrm{\;coll.}\,|\,U_i\mathrm{\;act.}\right) \geq 1-\epsilon \right\},
\end{equation}
where $\epsilon>0$ is a small number.
In words, ${G^\bullet(\lambda,\epsilon)}$ is the largest normalized load for which decoding probability
$ \mathbb P \left( U_i\mathrm{\;coll.}\,|\,U_i\mathrm{\;act.}\right) $ is above the prescribed value~$1-\epsilon$.
 Recall $P\left( U_i\,\,\mathrm{cov.}\right)$ and that $P\left( U_i\,\,\mathrm{cov.}\right)\approx 1-\Lambda_0$.
 It is clear that, when $1-\epsilon > P\left( U_i\,\,\mathrm{cov.}\right)$, due to the relation
 $\mathbb P \left( U_i\mathrm{\;coll.}\,|\,U_i\mathrm{\;act.}\right) \leq P\left( U_i\,\,\mathrm{cov.}\right)$,
  $\mathbb P \left( U_i\mathrm{\;coll.}\,|\,U_i\mathrm{\;act.}\right)$ cannot be
 greater or equal $1-\epsilon$ for any $G$, i.e., no matter how small $G$ is.
 Thus, whenever $1-\epsilon > P\left( U_i\,\,\mathrm{cov.}\right)$, by convention
 we say ${G^\bullet(\lambda,\epsilon)}=0.$

\begin{remark}
We explain the motivation behind quantity~${G^\bullet(\lambda,\epsilon)}$.
 Suppose there are $\tau$ available slots, where each user
 is active in exactly one among the slots. The system has the following requirement on the ``quality of service''
 -- each user~$U_i$ be collected with probability above $1-\epsilon$.
 This translates into the requirement
 $\mathbb P \left( U_i\mathrm{\;coll.}\,|\,U_i\mathrm{\;act.}\right) \geq 1-\epsilon.$
  For a fixed $m,p,r$, we ask what is the maximal number of users
  that can be served with the guaranteed quality of service. That is,
   we look for $\sup\{n:\,\mathbb P \left( U_i\mathrm{\;coll.}\,|\,U_i\mathrm{\;act.}\right) \geq 1-\epsilon\}$.
   As $m, p, r$, are fixed and $G=n p/m$, this is equivalent to finding~\eqref{eqn-metric}. We will later
   be interested in optimizing (maximizing) ${G^\bullet(\lambda,\epsilon)}$.
   \end{remark}

\vspace{-0mm}

\subsection{Non-cooperative decoding}
\label{subsection-non-coop-analysis}
We now characterize $\mathbb P \left( U_i\mathrm{\;coll.}\right)$ for the non-cooperative decoding.
 As we will see, the sought probability depends on the distributions of the areas covered by randomly generated balls.
%
%
  Specifically, consider the ball~$\mathbf{B}(0,r)$. Fix some $k \geq 1$, and generate randomly $k$ points $q_1,...,q_k$, where
 $q_l$'s, $l=1,...,k$, are drawn mutually independently from the uniform distribution on~$\mathbf{B}(0,r)$,
 and let $\alpha_k$ be the random variable that equals the area of $\cup_{l=1,...,k}\mathbf{B}(q_l,r)$ divided (normalized) by $r^2\pi$.
 Further, denote by $\mu_k$ the probability distribution on $[0,\infty)$ induced by $\alpha_k$.
  Clearly, $\alpha_k$ (and hence $\mu_k$) does not depend on $r$ due to normalization.
  Hence, we can set $r=1$. Also, it is easy to see that $\alpha_1 = 1$ with probability one, i.e.,
  $\mu_1$ is the Dirac distribution at $1$. Also, for any $k$, $1 \leq \alpha_k \leq 4$, with probability one, i.e.,
  $\mu_k$ is supported on $[1,4]$. This is because all the $q_l$'s, $l=1,...,k$, belong to
  $\mathbf{B}(0,1)$, and thus $\cup_{l=1,...,k}\mathbf{B}(q_l,1)$ is always a subset
  of $\mathbf{B}(0,2)$.   The distributions $\mu_k$, $k=2,...,m$, are difficult to compute.
  However, they can be partially characterized
  by estimating
  the first $s_{\mathrm{max}}$ moments $\overline{\alpha}_k^{(s)}:=\int_{1}^4
  a^s \,d\mu_k(a)$, $s=1,...,s_{\mathrm{max}}$. This can be done, e.g.,
   through Monte Carlo simulations.
   We emphasize that the moments $\overline{\alpha}_k^{(s)}$, $s=1,...,s_{\mathrm{max}}$,
   $k=2,...,m$, need to be tabulated only once (just like, e.g.,
   the tail distribution of the standard Gaussian.) That is, once we have the $\overline{\alpha}_k^{(s)}$'s available,
   they apply for any set of parameters~$n,m,p,r$.


  We now state our result on $\mathbb P \left( U_i\mathrm{\;coll.}\,|\,U_i\mathrm{\;act.}\right)$ and $T(G)$.
  We distinguish two cases: 1) non-asymptotic regime of finite $r,n,m$, that corresponds to (binomial) degree distributions $\Lambda_d,\Psi_d$; and 2) asymptotic regime
  that corresponds to (Poisson) degree distributions $\Lambda_{\infty,d}$, $\Psi_{\infty,d}$.
\begin{theorem}
\label{theorem-non-coop}
Consider the non-cooperative decoding algorithm. Then, for $r \leq 1/4$, we have:
$
p\,P_{\mathrm{coll.}}^{\,\mathrm{o},r}\leq \mathbb P \left( U_i\mathrm{\;coll.}\right)\leq p\,\left(P_{\mathrm{coll.}}^{\,\mathrm{o},r} +8r-16r^2\right),
$
 where
 $P_{\,\mathrm{coll.}}^{\mathrm{o},r} = \mathbb P \left(U_i\mathrm{\;coll.}\,|\,U_i\mathrm{\;act.},\,u_i \in \mathcal{A}^{\mathrm{o},r}\right)$ and equals:
\begin{equation}
\label{eqn-P-zeta-k}
P_{\,\mathrm{coll.}}^{\mathrm{o},r} = \sum_{k=1}^m (-1)^{k-1}\,\zeta_k\,{I_k},\;\;\;\zeta_k= \sum_{d=k}^m {d \choose k}\,\Lambda_d,
\end{equation}
and
$
{I_k}= \int_{a=1}^4 \left(1-p\, r^2\pi a \right)^{n-1}\,d\mu_k(a).
$
 Further, let $p$ be fixed, $n \rightarrow \infty$, $m = m(n) \rightarrow \infty$, and $r=r(n) \rightarrow 0$,
and recall $\lambda,\psi$ in~\eqref{eqn-lambda-psi-infty}. Then,
$
\mathbb P \left( U_i\mathrm{\;coll.} \right) \rightarrow
p\,\sum_{k=1}^{\infty} (-1)^{k-1}\,\frac{\lambda^k}{k!}\,{I_{\infty,k}},
$
where
$
{I_{\infty,k}}= \int_{a=1}^4 e^{-\psi\,a}\,d\mu_k(a).
$
\end{theorem}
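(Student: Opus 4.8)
The plan is to first reduce the claim to computing $P_{\mathrm{coll.}}^{\mathrm{o},r}$, the probability that a \emph{nominally placed} active user is collected, and then to evaluate that quantity by inclusion--exclusion over the base stations adjacent to $U_i$. Using $\mathbb P(U_i\mathrm{\;coll.})=p\,\mathbb P(U_i\mathrm{\;coll.}\mid U_i\mathrm{\;act.})$ and splitting the latter according to whether $u_i$ lies in $\mathcal{A}^{\mathrm{o},r}$ (probability $(1-4r)^2$) or in the boundary strip (probability $8r-16r^2$), the two-sided bound follows at once by bounding the unknown boundary conditional probability between $0$ and $1$; so everything rests on $P_{\mathrm{coll.}}^{\mathrm{o},r}$. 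Conditioning on $U_i\mathrm{\;act.}$ and $u_i=q\in\mathcal{A}^{\mathrm{o},r}$, the hypothesis $r\le 1/4$ together with nominality gives $\mathbf{B}(q,r)\subseteq\mathbf{B}_\infty(0,1/2-r)$, so every ball $\mathbf{B}(b,r)$ with $b\in\mathbf{B}(q,r)$ lies entirely inside $\mathcal{A}$ and there is no boundary clipping of areas. I would then further condition on $\mathrm{deg}(U_i)=d$: since base stations are placed independently of the users, this conditioning does not disturb the users, and the $d$ adjacent base stations have positions that are i.i.d.\ uniform on $\mathbf{B}(q,r)$. Writing $A_l$ for the event ``no other active user lies in $\mathbf{B}(b_l,r)$,'' the active user $U_i$ is collected iff $\bigcup_{l=1}^{d}A_l$ occurs.

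Next comes the heart of the computation: inclusion--exclusion, $\mathbb P\!\left(\bigcup_{l\le d}A_l\right)=\sum_{k=1}^d(-1)^{k-1}\sum_{|S|=k}\mathbb P\!\left(\bigcap_{l\in S}A_l\right)$. For a fixed $k$-subset $S$, the event $\bigcap_{l\in S}A_l$ is precisely that no other active user falls in the union $\bigcup_{l\in S}\mathbf{B}(b_l,r)$. Conditioning on the $k$ positions $(b_l)_{l\in S}$ --- i.i.d.\ uniform on $\mathbf{B}(q,r)$ --- each of the remaining $n-1$ users independently avoids being simultaneously active and inside that union with probability $1-p\cdot\mathrm{area}\!\left(\bigcup_{l\in S}\mathbf{B}(b_l,r)\right)$, and by translation invariance the normalized area $\mathrm{area}(\cdot)/(r^2\pi)$ has exactly the law $\mu_k$. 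Hence $\mathbb P(\bigcap_{l\in S}A_l)=\int_1^4(1-pr^2\pi a)^{n-1}\,d\mu_k(a)=I_k$, which depends on $S$ only through $k=|S|$. Summing the $\binom dk$ identical terms, averaging over $d$ against the ($q$-independent) weights $\Lambda_d$, and interchanging the two finite sums yields $P_{\mathrm{coll.}}^{\mathrm{o},r}=\sum_{k=1}^m(-1)^{k-1}\zeta_kI_k$ with $\zeta_k=\sum_{d\ge k}\binom dk\Lambda_d$; the identity $\binom dk\binom md=\binom mk\binom{m-k}{d-k}$ also gives the convenient closed form $\zeta_k=\binom mk(r^2\pi)^k$, which I would use for the limit.

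For the asymptotic regime I would pass to the limit termwise. From $\zeta_k=\binom mk(r^2\pi)^k=\tfrac{1}{k!}\prod_{i=0}^{k-1}\big((m-i)r^2\pi\big)$ and $mr^2\pi\to\lambda$ one gets $\zeta_k\to\lambda^k/k!$. For $I_k$, since $p$ is fixed and $r\to0$ we have $pr^2\pi\to0$ and $(n-1)\log(1-pr^2\pi a)\to-\psi a$ uniformly for $a\in[1,4]$ (the linear part tends to $\psi a$ since $(n-1)pr^2\pi\to\psi$, while the quadratic remainder is of order $(n-1)(pr^2\pi)^2a^2\to0$ uniformly), so $(1-pr^2\pi a)^{n-1}\to e^{-\psi a}$ uniformly on the support of $\mu_k$, giving $I_k\to I_{\infty,k}=\int_1^4 e^{-\psi a}\,d\mu_k(a)$. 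To exchange the limit with the infinite series I would use the domination $|(-1)^{k-1}\zeta_kI_k|\le\zeta_k\le (mr^2\pi)^k/k!\le(\lambda+1)^k/k!$ for all large $n$, whose sum over $k$ is $e^{\lambda+1}<\infty$; this makes the tail of the series uniformly small in $n$, so by Tannery's theorem $P_{\mathrm{coll.}}^{\mathrm{o},r}\to\sum_{k=1}^\infty(-1)^{k-1}\tfrac{\lambda^k}{k!}I_{\infty,k}$. Finally $8r-16r^2\to0$, so the finite-regime sandwich forces $\mathbb P(U_i\mathrm{\;coll.})\to p\sum_{k=1}^\infty(-1)^{k-1}\tfrac{\lambda^k}{k!}I_{\infty,k}$, as claimed.

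The step I expect to be the main obstacle --- really the only non-routine point --- is the inclusion--exclusion identification: recognizing that $\bigcap_{l\in S}A_l$ is exactly ``no other active user in the union of the $k$ balls,'' so that after conditioning on the base-station positions its probability is governed entirely by the union area and therefore by the tabulated distribution $\mu_k$. Once that is in place, the degree averaging, the binomial simplification of $\zeta_k$, the boundary bounding, and the uniform-convergence/Tannery arguments in the limit are all straightforward.
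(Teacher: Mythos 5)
Your derivation of $P_{\mathrm{coll.}}^{\mathrm{o},r}$ follows the same route as the paper's (sketched) proof: condition on a nominal placement $u_i=q$, apply the total probability law over $\mathrm{deg}(U_i)$, run inclusion--exclusion over the events $\{B_l\ \mathrm{empty}\}$, and identify the $k$-fold intersection probability with $I_k$ by observing that, given the degree, the adjacent stations are i.i.d.\ uniform on $\mathbf{B}(q,r)$ and the intersection event is governed solely by the area of $\bigcup_{l}\mathbf{B}(b_l,r)$, whose normalized law is $\mu_k$ by translation invariance. The paper asserts $\eta(k,d)=I_k$ without proof; your argument supplies exactly the missing justification, and your closed form $\zeta_k=\binom{m}{k}(r^2\pi)^k$ together with the uniform convergence of $(1-pr^2\pi a)^{n-1}$ to $e^{-\psi a}$ and the Tannery-type domination cleanly handles the asymptotic statement, which the paper does not detail. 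All of this is correct.

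The one step that does not deliver what is claimed is the finite-$n$ \emph{lower} bound. Writing $\mathbb{P}(U_i\ \mathrm{coll.}\mid U_i\ \mathrm{act.})=(1-4r)^2P_{\mathrm{coll.}}^{\mathrm{o},r}+(8r-16r^2)P^{\partial}$ and bounding $P^{\partial}\in[0,1]$ gives the stated upper bound, but on the lower side it gives only $p\,(1-4r)^2P_{\mathrm{coll.}}^{\mathrm{o},r}$, which is strictly smaller than the theorem's $p\,P_{\mathrm{coll.}}^{\mathrm{o},r}$ whenever $r>0$. To obtain the stated lower bound by this decomposition you would need $P^{\partial}\geq P_{\mathrm{coll.}}^{\mathrm{o},r}$, i.e.\ that a boundary-placed user is collected at least as often as a nominally placed one; this is not obvious and is dubious in the light-load regime, where a boundary user's loss of coverage (part of $\mathbf{B}(u_i,r)$ falls outside $\mathcal{A}$, so fewer stations hear it) outweighs its reduced interference. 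Since the paper omits the proof of the sandwich, you should either prove such a monotonicity statement or state the weaker (and equally useful) bound $p\,(1-4r)^2P_{\mathrm{coll.}}^{\mathrm{o},r}\leq\mathbb{P}(U_i\ \mathrm{coll.})$; either way the asymptotic conclusion is unaffected because $(1-4r)^2\to 1$.
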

We briefly comment on the structure of the results. It can be seen that $\zeta_k \rightarrow \lambda^k/k!$. This is because
 $\Lambda_d \rightarrow \Lambda_{\infty,d}$ in~\eqref{eqn-lambda-psi-infty}, and so
 $\zeta_k \rightarrow \sum_{d=k}^{\infty} {d \choose k}e^{-\lambda}\lambda^d/d!
  = \sum_{d=k}^{\infty} \frac{d!}{(d-k)! \,k!}\,e^{-\lambda}\frac{\lambda^{d}}{d!} =
  \lambda^k/k!$. Similarly, it can be shown that ${I_k} \rightarrow {I_{\infty,k}}$.
%

\emph{Sketch of the proof}. The detailed technical proof of Theorem~\ref{theorem-non-coop} is omitted due to lack of space and will be
provided in a companion journal paper. We briefly sketch the proof of~\eqref{eqn-P-zeta-k}, highlighting the main steps and omitting
 certain arguments. The keys are to use the inclusion-exclusion principle, conditioning on a user's degree, and
 considering the size of the areas covered by the user's neighboring base stations.
 We consider a user $U_i$ at the fixed nominal placement $u_i=q$, $ q \in \mathcal{A}^{\mathrm{o},r}$. Consider $\mathbb P \left( U_i\mathrm{\;coll.}\,|\,U_i\mathrm{\;act.},\,u_i=q\right)$. We first use the total probability law with respect to the degree $\mathrm{deg}(U_i)$:
\begin{align}
\label{eqn-proof-1}
\mathbb P \left( U_i\mathrm{\;coll.}\,|\,U_i\mathrm{\;act.},\,u_i = q \right) =
 \sum_{d=1}^m \gamma(d)\,\Lambda_d,
\end{align}
 where $\gamma(d) = \mathbb P\left( U_i\mathrm{\,coll.}\,|\, U_i\mathrm{\,act.,}\,U_i=q\,\,\mathrm{deg}(U_i)=d\right)$. Due to base stations' symmetry, without loss of generality we can assume that $B_1,...,B_d$ are the
neighbors. In other words, $\gamma(d)$ actually equals the probability that $U_i$ is collected, given that $U_i$ is active, $U_i$ is nominally placed at $q$, and $B_1,...,B_d$ are its neighbors.
Now, $U_i$ is collected if and only if at least one of the $B_l$'s decodes it, and $B_l$ collects $U_i$ if and only if
 $U_i$ is the only active user in $\mathbf{B}(b_l,r)$. (We then say that $B_l$ is empty.)
 Summarizing, $\gamma(d)$ is the conditional probability of the union
 $\cup_{l=1}^d\{B_l\mathrm{\,\mathrm{empty}}\}$, given
 that the $U_i$'s neighbors are $B_1,...,B_d$, $u_i=q$, $U_i$-active. Now, applying
 the inclusion-exclusion formula:
 \begin{align}
 \label{eqn-proof-2}
 \gamma(d) = \sum_{k=1}^d (-1)^{k-1} {d \choose k} \eta(k,d),
 \end{align}
 where $\eta(k,d)$ is the conditional probability of
 $\cap_{l=1}^k \{B_l\mathrm{\;empty}\}$,
 conditioned on $B_1,...,B_d$ be neighbors, $u_i=q$, $U_i$-active.
 Intuitively, $\eta(k,d)$ depends on the area
 of $\cup_{l=1}^k \mathbf{B}(b_l,r)$, because
 we look at the event that no active users lie in~$\cup_{l=1}^k \mathbf{B}(b_l,r)$.
 It can be shown (proof omitted) that $\eta(k):=\eta(k,d)$ equals~$I_k$ in Theorem~\ref{theorem-non-coop} and  is
 independent of~$d$.
  Substituting $\eta(k,d)=I_k$ in~\eqref{eqn-proof-2}, plugging the resulting
  equation in~\eqref{eqn-proof-1}, and noting that the probability in~\eqref{eqn-proof-1} is the same for all
  nominal $q$'s, we obtain~\eqref{eqn-P-zeta-k}.

\textbf{Numerical calculation}.
Theorem~\ref{theorem-non-coop} expresses $P \left( U_i\mathrm{\;coll.} \right)$ in the form that is
difficult to compute. Assuming that the first moments $\overline{\alpha}_k:=\overline{\alpha}_k^{(1)}$,
$k=1,...,k_{\mathrm{max}}$ are available (e.g., obtained through Monte Carlo simulations), we can compute $\mathbb P \left( U_i\mathrm{\;coll.} \right)$ with a high accuracy
and a small computational cost, through the formula:
\begin{equation}
\label{eqn-P-zeta-k-approx}
\mathbb P \left( U_i\mathrm{\;coll.} \right) \approx
p\,\sum_{k=1}^{k_{\mathrm{max}}} (-1)^{k-1}\,\frac{\lambda^k}{k!}\,e^{-\overline{\alpha}_k\,\psi}.
\end{equation}
Here, we approximated $\sum_{k=1}^\infty(-1)^{k-1}\lambda^k/k! I_{\infty,k}$ at $k=k_{\mathrm{max}}$
by letting
 ${I_{\infty,k}}\approx e^{-\overline{\alpha}_k\,\psi}$ and truncating
 the infinite sum at $k=k_{\mathrm{max}}$.
 Formula~\eqref{eqn-P-zeta-k-approx} gives high accuracies for $m$ of order $100$ or larger.
 Given the quantity $k_{\mathrm{max}}$, approximation is
 accurate for $\lambda$ sufficiently smaller than $k_{\mathrm{max}}$, e.g.,
 $\lambda\leq 0.25 \cdot k_{\mathrm{max}}$. (In other words, for a larger $\lambda$, larger
 $k_{\mathrm{max}}$ is needed.)\footnote{More generally,
 for arbitrary set of system parameters $n,m,p,r$, formulas in~\eqref{eqn-P-zeta-k} and~\eqref{eqn-P-zeta-k-approx}
 are computable with arbitrarily high accuracy, provided that $k=k_{\mathrm{max}}$
 is large enough and the moments $\overline{\alpha}_k^{(s)}$, $k=1,...,k_{\mathrm{max}}$, $s=1,...,s_{\mathrm{max}}$,
  are available. The accuracy can be controlled with the increase of $k_{\mathrm{max}}$ and $s_{\mathrm{max}}$.
  For example, consider ${I_k}$. The integrand $(1-p\,a r^2\pi)^{n-1}$ is a polynomial
of order $n-1$ and can be written as $\sum_{s=0}^{n-1} c_s\,a^s$, where $c_s=c_s(n,p,r)$ are the coefficients.
Therefore, ${I_k} = \sum_{s=0}^{n-1} c_s \overline{\alpha}_k^{(s)}$.}

\textbf{A simple lower bound on $\mathbb P(U_i\mathrm{\,coll.})$}.
We derive a lower bound on $\mathbb P(U_i\mathrm{\,coll.})$, which is loose but very useful in providing insights into the system performance.
We exploit this bound in Section~\ref{section-numerical-interpretations}.
\begin{lemma}
\label{lemma-non-coop}
Consider the non-cooperative decoding algorithm in the asymptotic
setting as in the second part of Theorem~\ref{theorem-non-coop}. Then:
$
\lim_{n \rightarrow \infty}\mathbb P \left( U_i\mathrm{\,coll.}\right) \geq p\,(1-e^{-\lambda})e^{-4 \psi}.
$
\end{lemma}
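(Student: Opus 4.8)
The plan is to bypass the alternating inclusion--exclusion sum of Theorem~\ref{theorem-non-coop} and prove the bound directly, by exhibiting a simple event contained in $\{U_i\mathrm{\,coll.}\}$ whose probability is easy to evaluate. Fix a nominal placement $u_i=q\in\mathcal{A}^{\mathrm{o},r}$ and condition on $U_i$ being active. The key geometric observation is: if $U_i$ has at least one adjacent base station $B_l$, and if no other active user lies in the ball $\mathbf{B}(q,2r)$, then $U_i$ is collected. Indeed, $|b_l-q|\le r$ forces $\mathbf{B}(b_l,r)\subseteq\mathbf{B}(q,2r)$, so the only active user in $\mathbf{B}(b_l,r)$ is $U_i$ itself, i.e.\ $B_l$ is empty and collects $U_i$. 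Nominal placement guarantees $\mathbf{B}(q,2r)\subseteq\mathcal{A}$, so no boundary truncation of the relevant area occurs.

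Next I would note that the two events in this sufficient condition are independent. The event ``$\mathrm{deg}(U_i)\ge 1$'' depends only on the base-station locations $b_1,\dots,b_m$, whereas ``no other active user in $\mathbf{B}(q,2r)$'' depends only on the locations and activity indicators of $U_j$, $j\ne i$; by the modelling assumptions these two families are mutually independent (and independent of $U_i$'s activity). Hence
\begin{equation}
\label{eqn-lem-proof}
\mathbb P\left( U_i\mathrm{\,coll.}\,|\,U_i\mathrm{\,act.},\,u_i=q\right)\;\ge\;(1-\Lambda_0)\left(1-4p\,r^2\pi\right)^{n-1},
\end{equation}
since $1-\Lambda_0=1-(1-r^2\pi)^m$ is the probability that $U_i$ has at least one adjacent base station, and $(1-4p\,r^2\pi)^{n-1}$ is the probability that none of the $n-1$ other users is simultaneously active and inside the area-$4r^2\pi$ ball $\mathbf{B}(q,2r)$. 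The right-hand side of~\eqref{eqn-lem-proof} is independent of $q$, so the same bound holds for $\mathbb P\left(U_i\mathrm{\,coll.}\,|\,U_i\mathrm{\,act.},\,u_i\in\mathcal{A}^{\mathrm{o},r}\right)$.

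Finally I would assemble the unconditional probability: using $\mathbb P(U_i\mathrm{\,coll.})=p\,\mathbb P(U_i\mathrm{\,coll.}\,|\,U_i\mathrm{\,act.})$ and conditioning on whether $u_i$ is a nominal or a boundary placement, drop the nonnegative boundary term to obtain $\mathbb P(U_i\mathrm{\,coll.})\ge p\,(1-4r)^2(1-\Lambda_0)(1-4p\,r^2\pi)^{n-1}$. Letting $n\to\infty$ with $r\to 0$, $m\to\infty$ under $m r^2\pi\to\lambda$ and $npr^2\pi\to\psi$, we have $(1-4r)^2\to 1$, $1-\Lambda_0\to 1-e^{-\lambda}$ by~\eqref{eqn-lambda-psi-infty}, and $(1-4p\,r^2\pi)^{n-1}\to e^{-4\psi}$ (the same elementary limit that yields $\Psi_{\infty,d}$, applied with parameter $4\psi$). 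This gives $\lim_{n\to\infty}\mathbb P(U_i\mathrm{\,coll.})\ge p(1-e^{-\lambda})e^{-4\psi}$.

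There is essentially no hard step here; the bound is deliberately crude. The only point requiring care is the containment/independence argument of the first two paragraphs --- in particular, recognizing that enlarging $\mathbf{B}(b_l,r)$ to the fixed ball $\mathbf{B}(q,2r)$ is precisely what decouples the base-station randomness from the user randomness, at the price of the loose factor $e^{-4\psi}$ in place of the true $\mathbb E[e^{-\psi\alpha_k}]$ with $\alpha_k\in[1,4]$. One could alternatively try to derive the inequality from the exact series of Theorem~\ref{theorem-non-coop} via $I_{\infty,k}\ge e^{-4\psi}$, but the alternating signs make that route delicate, so the direct argument above is preferable.
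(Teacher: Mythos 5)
Your proof is correct and follows essentially the same route as the paper: exhibit the sufficient event that some base station lies in $\mathbf{B}(q,2r)$'s inner ball $\mathbf{B}(q,r)$ while no other active user lies in $\mathbf{B}(q,2r)$, use the independence of user and base-station placements, and pass to the limit. Your factor $(1-4p\,r^2\pi)^{n-1}$ is in fact the correct one (the paper's proof writes $(1-4r^2\pi)^{n-1}$, which appears to be a typo, since only that version converges to $e^{-4\psi}$ under $npr^2\pi\to\psi$).
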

\begin{proof} Consider $U_i$ at a nominal placement $q \in {\mathcal A}^{\mathrm{o},r}$ and suppose that $U_i$ is active.
If there exists a base station in $\mathbf{B}(q,r)$ and there are no active users in $\mathbf{B}(q,2r)$ (other than $U_i$), then $U_i$ is collected. Let $\widehat {P}$ denote the probability of the former; clearly, $\mathbb P \left( U_i\mathrm{\,coll.}|U_i\mathrm{\,act.},\,u_i=q\right) \geq \widehat P$. By the independence of the users' and base stations' placements, we have that $\widehat P = (1 - (1 - r^2 \pi)^m)(1 - 4 r^2 \pi)^{n-1}$, which in the asymptotic regime goes to $(1- e^{-\lambda}) e^{-4\psi}$. Passing to the limit (where boundary effects vanish), the result
follows.
\end{proof}


\vspace{-0mm}

\subsection{A heuristic for cooperative decoding}
\label{subsection-coop-analysis}
We now derive a heuristic formula for $\mathbb P \left( U_i\mathrm{\;coll.} \right)$
with cooperative decoding. The heuristic relies on
our arguments for the non-cooperative case. It takes into account only the first two iterations of cooperative decoding.
 (See Remark~\ref{remark-heuristic}.)
Consider arbitrary fixed user~$U_j$ at a nominal placement.
Let $1-\sigma_1$ be the probability that
 $U_j$ has been collected after the first iteration~$t=1$, given that it is active.
   It is easy to see that this is precisely the corresponding decoding probability
 for the non-cooperative case. We thereby approximate it with
  $1-\sigma_1 \approx \sum_{k=1}^{k_{\mathrm{max}}}(-1)^{k-1}\lambda^k/k! \, \mathrm{exp}(-\overline{\alpha}_k\,\psi)$.
   Now, let $1-\sigma_2$ be the probability that
 arbitrary fixed user has been collected after $2$ iterations,
 given it is active. We take a conservative approach by
 approximating the decoding probability after complete decoding algorithm
 be~$1-\sigma_{2}$. We now evaluate $1-\sigma_2$. Fix user~$U_i$.
 We neglect the boundary effects and
consider $U_i$ at the nominal placement $u_i=q$, $q \in {\mathcal A}^{\mathrm{o},r}$, i.e., we set $1-\sigma_2 \approx \mathbb P \left( U_i\mathrm{\;coll.}\,|\,U_i\mathrm{\;act.},\,u_i = q \right). $
Using the total probability law with respect to the $U_i$'s degree:
$
1-\sigma_2 \approx
 \sum_{d=1}^m \gamma^\prime(d)\,\Lambda_d,
$
 where $\gamma^\prime(d) = \mathbb{P}(U_i\mathrm{\,coll.}\,|\,U_i\mathrm{\,act.},\,u_i=q,\,
 \mathrm{deg}(U_i)=d)$.
 Fix~$d$, and without loss of generality, fix the neighborhood of $U_i$ to $B_1,...,B_d$.
  For each $B_l$, $l=1,...d$, let $1-\rho_1$ be the probability
  that all active users~$U_j$, $j \neq i$, adjacent to
  $B_l$, have been decoded after iteration~$t=1$. (In the graph representation of decoding, this
  corresponds to $B_l$ being
  connected only to $U_i$ after iteration~$t=1$.)
  We say in the latter case that $B_l$ is known.
  Note that $U_i$ has been collected after $t=2$ if and only if
  there exists at least one $B_l$, $l=1,...,d$, such that $B_l$ is known after $t=1$ (We write this shortly  as ``$B_l$ known''.)
   If the graph were random as in, e.g.,~\cite{liva}, the events $\{B_l\mathrm{\,known}\}$ would be
   mutually independent, and, moreover,
   they would be independent of $\mathrm{deg}(U_i)=d.$ Then, we would have the following formula:
   $\gamma^\prime(d)=1-\rho_1^d$. However, this is not the case here, and
   we need to proceed in a different way. In particular,
   we account for
   the correlation of the events $\{B_l\mathrm{\,known}\}$ through the inclusion-exclusion formula on the event $\cup_{l=1}^d \mathbb{P}(B_l\mathrm{\,known})$:
 $
 \gamma^\prime(d) = \sum_{k=1}^d (-1)^{k-1} {d \choose k} \eta^\prime(k,d).
$
 Here, $\eta^\prime(k,d)$ is the conditional probability of
 $\cap_{l=1}^k \{B_l\mathrm{\;known}\}$,
 given that $B_1,...,B_d$ be the neighbors of~$U_i$, $u_i=q$, $U_i$-active.
  It remains to find $\eta^\prime(k,d)$. Clearly, this quantity
   depends on the area
 of $\cup_{l=1}^k \mathbf{B}(b_l,r)$, but also it depends on~$d$.
  We now approximate~$\eta^\prime(k,d)$ by accounting for the former dependence and by neglecting the latter dependence.
  Specifically,
  for each $k,d$,
  we let
  \begin{equation}
  \label{eqn-approx}
  \eta^\prime(k,d) \approx \int_{a=1}^4 (1-\rho_1)^{a} d\mu_k(a) \approx (1-\rho_1)^{\overline{\alpha}_k}.
  \end{equation}
  \begin{remark}
  \label{remark-heuristic}
  The motivation for~\eqref{eqn-approx} is the following.
  Suppose that, after~$t=1$, the unknown users inside $\mathbf{B}(q,2r)$ followed
  a Poisson distribution with mean $\lambda\,\sigma_1$. (On average,
  there are $a \lambda$ users over an area of size~$a$, the fraction of which
  are unknown is $\sigma_1.$) Further, suppose that
  their distribution does not depend on~$d$ -- the number of
  base stations in $\mathbf{B}(q,r)$. (This is not
  the case in general, as more base stations in~$\mathbf{B}(q,r)$ tend
  to reduce the number of unknown users.)
  Then, the probability that
  there are no unknown users in $\mathbf{B}(b_l,r)$ would be
  $1-\rho_1 = \mathrm{exp}(-\lambda\,\sigma_1\,r^2\pi)$. Recall $\alpha_k$ from Section~\ref{section-performance-analysis}. Then,
  given that the area of $\cup_{l=1}^k \mathbf{B}(b_l,r)$ is $r^2\pi \alpha_k$,
  the probability that there are no unknown users in $\cup_{l=1}^k \mathbf{B}(b_l,r)$ would be
  $\nu_k = \mathrm{exp}(-\lambda\,\sigma_1 \alpha_k\, r^2 \pi)$.
  Therefore, $\nu_k = (1-\rho)^{\alpha_k}$; averaging
  with respect to $\alpha_k$, we finally obtain the left approximation~\eqref{eqn-approx}. (The right one
  is as in~\eqref{eqn-P-zeta-k-approx}.) The dependence of~$\eta^\prime(k,d)$ on~$d$ is more pronounced
  when the considered iteration~$t$ is larger (in~\eqref{eqn-approx}, it is $t=1$),
  and~\eqref{eqn-approx} becomes less accurate. Thus, we stop at $t=2$ and let~$
  \mathbb P (U_i\mathrm{\,coll.}) \approx p(1-\sigma_2)$. Our future work will address the dependence of $\eta^\prime(k,d)$ on~$d$.
  \end{remark}
  Now, proceeding analogously to the non-cooperative case, and taking
  the asymptotic setting, we obtain:
 \begin{equation}
 \label{eqn-approx-resenje-1}
  \sigma_2
=1-
\sum_{k=1}^{k_{\mathrm{max}}}(-1)^{k-1} \frac{\lambda^k}{k!} (1-\rho_1)^{\overline{\alpha}_k}.
\end{equation}
Formula~\eqref{eqn-approx-resenje-1} can be seen as a counterpart
of the following formula from the density evolution analysis on random graphs: $\sigma_2 = 1 - \mathrm{exp}(-\lambda\,\rho_1)$.
It remains to express $\rho_1$ in terms of $\sigma_1$.
We omit details due to lack of space, but
the derivation of the approximate formula is completely
dual (analogous) to~\eqref{eqn-approx-resenje-1}. A fixed station~$B_l$
 is replaced with fixed user~$U_i$, the total probability law is done
 with respect to $\Psi_d$ instead of $\Lambda_d$, and the event
 $\cup_{l=1}^d\{B_l\mathrm{\;known}\}$ is replaced with $\cup_{j=1}^d \{U_j\mathrm{unknown}\}$.
 The resulting formula is:
\begin{align}
\label{eqn-approx-resenje-2}
\rho_1 = \sum_{k=1}^{k_{\mathrm{max}}}(-1)^{k-1} \frac{\psi^k}{k!} (\sigma_1)^{\overline{\alpha}_k}.
\end{align}
In summary, we set $\mathbb{P}(U_i\mathrm{\,coll.}) \approx p(1-\sigma_2)$,
where $\sigma_2$ is in~\eqref{eqn-approx-resenje-1}, $\rho_1$ is in~\eqref{eqn-approx-resenje-2}, and
 $\sigma_1 = 1 - \sum_{k=1}^{k_{\mathrm{max}}}(-1)^{k-1} \frac{\lambda^k}{k!} \mathrm{exp}(-\overline{\alpha}_k\,\psi).$

\section{Numerical studies and interpretations}
\label{section-numerical-interpretations}
In this section, we carry out numerical studies and provide interpretations of our results.

\textbf{Simulation setup}. We explain the simulation setup that we use throughout the section. We set the number of base stations $m=100$, and
users' activation probability $p=0.25$. We simulate the decoding probability $\mathbb P(U_i\mathrm{\,coll.}\,|\,U_i\mathrm{\,act.})$
 and the normalized throughput $T(G)$ for different values of $G=n p/m$ by varying $n$.
 We vary $n$ such that $G$ varies within the interval $[0,1]$. We evaluate
 these quantities through Monte Carlo simulations. For each value of $n$,
 we generate one instance of the network, i.e., we place users and base stations
 uniformly over a unit square. For a fixed network, we run the
 non-cooperative and cooperative decoding. (With cooperative decoding, we
 simulate its graph representation.) For each fixed~$n$,
 we perform $1000$ simulation runs ($1000$ different graphs and the decoding algorithms over them.)
   For each $n$ (each~$G$), we
    estimate $\mathbb P(U_i\mathrm{\,coll.}\,|\,U_i\mathrm{\,act.})$ as $\widehat{P}/p$,
    where $\widehat{P}$ is the estimate of the decoding probability,
    and equals the total number of collected users divided by~$n$.
     We estimate the normalized throughput $T(G)$ as the total number of collected users divided by~$m$.
     We obtain the parameters $\overline{\alpha}_k$, $k=1,...,k_{\mathrm{max}}$,
     through Monte Carlo simulations.
     For each $k$, we repeat $4000$ different random, uniform, placements of $k$ points $q_k$
     in $\mathbf{B}(0,1)$. For each placement~$s$, $s=1,...,4000$, we estimate
     the area $a_s$ of $\cup_{l=1}^k \mathbf{B}(q_l,r)$ through the
     Monte Carlo simulation with $30,000$ trials. We set $k_{\mathrm{max}}=34.$
      We estimate $\overline{\alpha_k}$ as $(1/\pi)(1/4000)\sum_{s=1}^{4000} a_s$.
       With the non-cooperative decoding, we
       evaluate $\mathbb P(U_i\mathrm{\,coll.}\,|\,U_i\mathrm{\,act.})$ via~\eqref{eqn-P-zeta-k-approx},
       and $T(G) = G \,\mathbb P(U_i\mathrm{\,coll.}\,|\,U_i\mathrm{\,act.})$.
        With the cooperative decoding, we evaluate
        $\mathbb P(U_i\mathrm{\,coll.}\,|\,U_i\mathrm{\,act.})$ via~\eqref{eqn-approx-resenje-1}--\eqref{eqn-approx-resenje-2},
       and $T(G) = G \,\mathbb P(U_i\mathrm{\,coll.}\,|\,U_i\mathrm{\,act.})$.

\textbf{Throughput}. In the first experiment, we simulate
$T(G)$
 versus $G$ for $\lambda_{m,r}=m r^2\pi = 3$, and $\lambda_{m,r}=6$.
 Both values of $\lambda_{m,r}$ ensure coverage of at least $0.95$. Figure~\ref{figure-throughput} plots
 $T(G)$ for the non-cooperative and
 cooperative decoding. We depict both
 the theoretical values and the values obtained through simulations.
We first assess our theoretical findings. We can see that, in the non-cooperative case, our formula~\eqref{eqn-P-zeta-k-approx}
 accurately matches simulation. For the larger value $\lambda_{m,r}=6$, there is a slight
 mismatch, which could be eliminated by taking a larger~$k_{\mathrm{max}}$.
 For the cooperative case, our heuristic formulas~\eqref{eqn-approx-resenje-1}--\eqref{eqn-approx-resenje-2}
 follow well the trend of the curves, and we see that the heuristic
 is more accurate for the smaller value~$\lambda_{m,r}=3$. 
  Next, we compare the two decoding algorithms. From Figure~\ref{figure-throughput}, we can see
 that cooperation produces significant gains with respect to the non-cooperative decoding.
 For example, for $\lambda_{m,r}=3$,
  the peak throughput under cooperation is about $0.33$, while
  without cooperation it is below $0.2$. Similarly,
  for $\lambda_{m,r}=6$, the peak throughput
  under cooperation is $0.29$, while
  without cooperation we have about~$0.13$.

\textbf{Comparisons with a single base station}. To quantify the gains of diversity induced by multiple base stations, we
also compare the two decoding methods with the standard slotted Aloha and a
single base station. Suppose that we have one
base station at the center of the region, and that the station covers the full region.
Clearly, for such system and a large $n$,
$\mathbb P(U_i\mathrm{\,coll.}\,|\,U_i\mathrm{\,act.}) \approx \mathrm{exp}(-n\,p)$,
and the throughput is $n\,p\,\mathrm{exp}(-n\,p)$.
The peak throughput is $1/e\approx 0.37.$
To compare the single base station system with
the above two decoding algorithms,
we evaluate for each of the two the \emph{un-normalized} peak throughput.
For~$\lambda_{m,r}=3$, this quantity equals $0.33 \times m=33$ for cooperation,
 and $20$, without cooperation. Hence, $m=100$ base stations
 allow at least $54$ times larger throughput. More generally, consider the system in the asymptotic setting,
with $\lambda \geq \lambda_{\mathrm{min}}(\epsilon)=\mathrm{ln}(1/\epsilon)$ (i.e., with
the $1-\epsilon$ coverage.) From Lemma~\ref{lemma-non-coop},
with non-cooperative decoding (and hence with cooperative as well) $T(G)
\geq T^\prime(G) = G (1-e^{-\lambda})e^{-4\,G\lambda}$, where we used that $\psi=G \lambda$.
 For a fixed $\lambda$, the peak of $T^\prime(G)$ is
 $1/(4 \,e\,\lambda)(1-\mathrm{exp}(-\lambda))$.
 The maximum over $\lambda \geq \mathrm{ln}(1/\epsilon)$
  is attained at~$\mathrm{ln}(1/\epsilon)$ and equals
  $\frac{1}{4 \,e}\frac{1-\epsilon}{\mathrm{ln}(1/\epsilon)}$.
  Hence, comparing with the single base station system,
  the $m$-base station systems
  gives at least $\frac{1}{4}\frac{1-\epsilon}{\mathrm{ln}(1/\epsilon)} \times m$
   higher total (un-normalized) throughput.
   In other words, the total throughput grows linearly with the number of base stations~$m$.

\textbf{Quantity $G^\bullet(\epsilon,\lambda)$ and optimal radius~$r$}. We now give insights into how the system performance depends on $r$
and $\lambda$ for the two decoding algorithms, and we demonstrate large gains
of cooperation. Figure~\ref{figure-G-bullet} depicts
simulated $G^\bullet(\lambda,\epsilon)$ for $\epsilon \in \{0.2;0.1;0.08\}$.
First, we can see that, for each considered $\epsilon$,
 cooperation offers almost three times better (larger)
  $G^\bullet(\lambda,\epsilon)$ in a wide range of~$\lambda$.
  Second, we can see that there is an optimal~$\lambda(\epsilon)$.
  Consider, e.g., the non-cooperative case.
  On one hand, too small $r$ does not allow for
  sufficient coverage, and hence it yields poor performance.
   On the other hand, too large $r$ eliminates the benefits of diversity.
   To see this, just consider the case where $r=1$ and each base station covers all users.
   In this case, all base stations have same observations, and we effectively have the single base station system.

\vspace{-0mm}

\section{Conclusion}
\label{section-conclusion}
We studied effects of spatial diversity and cooperation of slotted Aloha
protocols with multiple base stations. Users and base stations are deployed
uniformly at random over a unit area. At a fixed slot, each user
transmits its packet (is active) with probability~$p$ and is heard
by all base stations placed within distance~$r$ from it.
 We first considered the non-cooperative decoding where
 a user is collected if it is a single active user at
 one of the base stations that hear it. We find the decoding probability
  and quantify the gains with respect to the standard single base station slotted Aloha.
  We show that the peak throughput with $m$ base stations is roughly
  $m/4$ times larger than when a single base station is available.
  Next, we propose a cooperative decoding, where the nearby base stations
  help each other resolve users' collisions through the interference cancellation mechanism.
    We demonstrate by simulation significant gains of cooperation
    with respect to the non-cooperative decoding. For example, for $m=100$ and $r \approx 0.1$,
    the peak throughout increases from $20$ to $33$. Also, the
    maximal load ($=n p/m$) for which the decoding probability is above $0.95$
    increases three times, for a wide range of~$r$. Finally, we
    give a heuristic formula for the decoding probability under cooperation. The formula
    accounts for the problem geometry and reflects well the actual performance.

\vspace{-0mm}

\begin{figure}[thpb]
      \centering
      \includegraphics[trim = 0mm 18mm 0mm 16mm, clip, height=1.8 in,width=3.5 in]{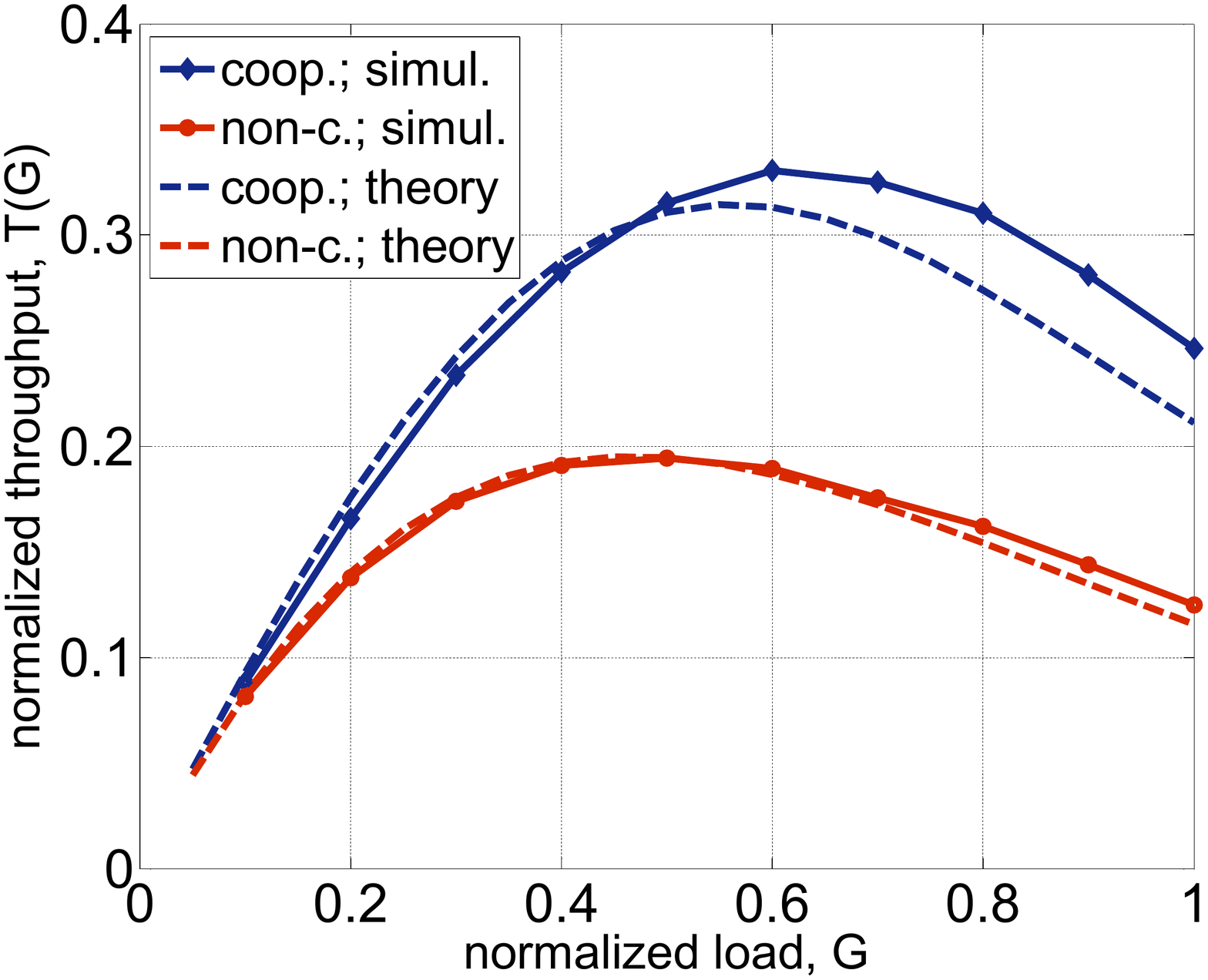} 
       \includegraphics[height=2.1 in,width=3.5 in]{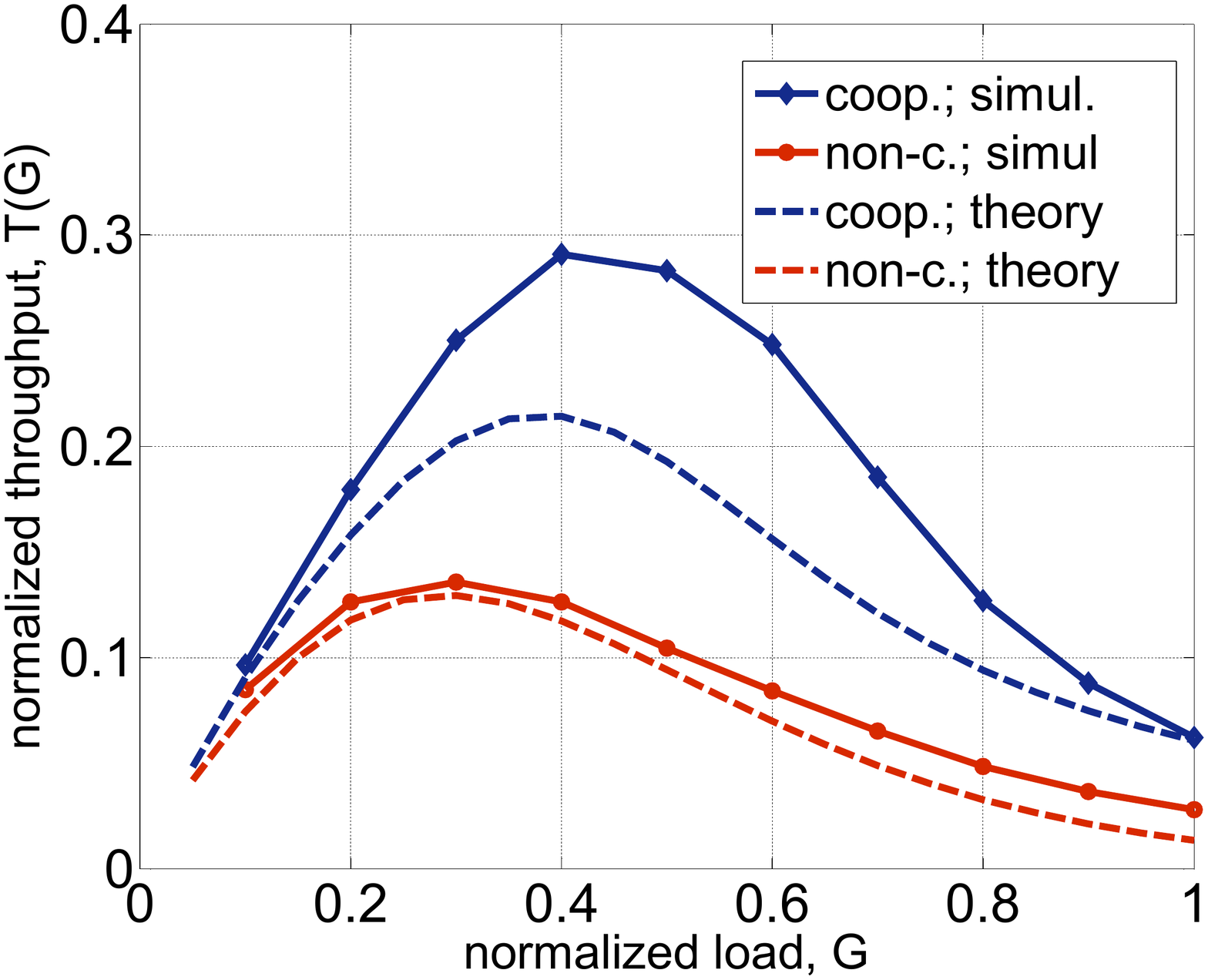}
       \vspace{-0mm}
       \caption{Normalized throughput $T(G)$ versus the normalized load $G=n p/m$, for the
       average user's degree $\lambda=3$ (top) and $\lambda=6$ (bottom).}
       \label{figure-throughput}
\end{figure}

\vspace{-0mm}



%
%
%

\vspace{3mm}

\bibliographystyle{IEEEtran}
\bibliography{IEEEabrv,bibliography}

\begin{thebibliography}{10}
\providecommand{\url}[1]{#1}
\csname url@samestyle\endcsname
\providecommand{\newblock}{\relax}
\providecommand{\bibinfo}[2]{#2}
\providecommand{\BIBentrySTDinterwordspacing}{\spaceskip=0pt\relax}
\providecommand{\BIBentryALTinterwordstretchfactor}{4}
\providecommand{\BIBentryALTinterwordspacing}{\spaceskip=\fontdimen2\font plus
\BIBentryALTinterwordstretchfactor\fontdimen3\font minus
  \fontdimen4\font\relax}
\providecommand{\BIBforeignlanguage}[2]{{%
\expandafter\ifx\csname l@#1\endcsname\relax
\typeout{** WARNING: IEEEtran.bst: No hyphenation pattern has been}%
\typeout{** loaded for the language `#1'. Using the pattern for}%
\typeout{** the default language instead.}%
\else
\language=\csname l@#1\endcsname
\fi
#2}}
\providecommand{\BIBdecl}{\relax}
\BIBdecl

\bibitem{Roberts}
G.~L. Roberts, ``{ALOHA} packet system with and without slots and capture,''
  \emph{{SIGCOMM} Comput. Commun. Rev.}, vol.~5, no.~2, April 1975.

\bibitem{Fadra}
H.~{Okada}, Y.~{Igarashi}, and Y.~{Nakanishi}, ``Analysis and application of
  framed {ALOHA} channel in satellite packet switching networks -- {FADRA}
  method,'' \emph{Electr. Comm. Japan}, vol.~60, pp. 72--80, Aug. 1977.

\bibitem{SlottedALOHAwithIC}
E.~Casini, R.~{D}e Gaudenzi, and O.~del~rio Herrero, ``Contention resolution
  diversity slotted {ALOHA} ({CRDSA}): {A}n enhanced random access scheme for
  satellite access packet networks,'' \emph{IEEE Transactions on Wireless
  Communications}, vol.~6, no.~4, pp. 1408--1419, April 2007.

\bibitem{liva}
G.~Liva, ``Graph-based analysis and optimization of contention resolution
  diversity slotted {ALOHA},'' \emph{IEEE Transactions on Communications},
  vol.~59, no.~2, pp. 477--487, February 2011.

\bibitem{FramelessALOHA}
C.~Stefanovic, P.~Popovski, and D.~Vukobratovic, ``Frameless aloha protocol for
  wireless networks,'' \emph{IEEE Commun. Lett.}, vol.~16, no.~12, pp.
  2087--–2090, 2012.

\bibitem{DiversityAloha}
G.~L. Choudhury and S.~S. Rappaport, ``Diversity {ALOHA} -- a random access
  scheme for satellite communications,'' \emph{IEEE Transactions on
  Communications}, vol.~31, no.~2, pp. 450--457, 1983.

\bibitem{ZorziSpatialDiversity}
M.~Zorzi, ``Mobile radio slotted {ALOHA} with capture, diversity and
  retransmission control in the presence of shadowing,'' \emph{Wireless
  Networks}, vol.~4, pp. 379--388, August 1998.

\bibitem{LivaNovo}
A.~Munari, M.~Heindlmaier, G.~Liva, and M.~Berioli, ``The throughput of slotted
  aloha with diversity,'' in \emph{$51$st Annual Allerton Conf. on Comm.,
  Contr., and Comp.}, Monticello, {IL}, Oct. 2013.

\bibitem{AndOrTree}
M.~Luby, M.~Mitzenmacher, and A.~Shokrollahi, ``Analysis of random processes
  via and-or tree evaluation,'' in \emph{9th Annual ACM-SIAM Symp. Discr.
  Algorithms}, San Francisco, {CA}, Jan. 1998, pp. 561–--568.

\bibitem{ERC}
M.~Luby, M.~Mitzenmacher, A.~Shokrollahi, and D.~Spielman, ``Efficient erasure
  correcting codes,'' \emph{IEEE Trans. Inf. Theory}, vol.~47, no.~2, pp.
  569--584, 2001.

\end{thebibliography}

\end{document}